\theoremstyle{plain}
\theoremstyle{plain}
\newtheorem{problem}{Problem}
\newtheorem{mylemma}{Lemma}
\theoremstyle{change}
\theoremstyle{nonumberplain} \theoremseparator{}
\theoremstyle{plain} \theoremsymbol{\ensuremath{\clubsuit}}
\newtheorem{mydefinition}{Definition}
\begin{document}

\title{
Methodology for Standard Cell Compliance and Detailed Placement for Triple Patterning Lithography
\vspace{.1in}
}

\numberofauthors{1}
\author{
\alignauthor Bei Yu,\ \  Xiaoqing Xu,\ \  Jhih-Rong Gao,\ \  David Z. Pan\\
\affaddr{
ECE Department, University of Texas at Austin, TX, USA\\ 
} 
\affaddr{\{bei, xiaoqingxu, jrgao, dpan\}@cerc.utexas.edu}
}

\maketitle
\thispagestyle{empty}   


\begin{abstract}
As the feature size of semiconductor process further scales to sub-16nm technology node,
triple patterning lithography (TPL) has been regarded one of the most promising lithography candidates.
M1 and contact layers, which are usually deployed within standard cells, are most critical and complex parts for modern digital designs.
Traditional design flow that ignores TPL in early stages may limit the potential to resolve all the TPL conflicts.
In this paper, we propose a coherent framework, including standard cell compliance and detailed placement to enable TPL friendly design.
Considering TPL constraints during early design stages, such as standard cell compliance, improves the layout decomposability.
With the pre-coloring solutions of standard cells, we present a TPL aware detailed placement, where the layout decomposition and placement can be resolved simultaneously.
Our experimental results show that, with negligible impact on critical path delay, our framework can resolve the conflicts much more easily,
compared with the traditional physical design flow and followed layout decomposition.
\end{abstract}

\section{Introduction}

As the feature size of semiconductor process technology nodes further scales to sub-16nm,
triple patterning lithography (TPL) is regarded as one of the most promising lithography candidates,
along with extreme ultra-violet lithography (EUVL), directed self-assembly (DSA), and electron beam lithography (EBL) \cite{ITRS,LITH_ICCAD2012_Yu}.
TPL is a natural extension along the paradigm of double patterning lithography (DPL), which has been pushed to its limit in sub-16nm,
to introduce better printability \cite{TPL_SPIE2012_Lucas}.

To deploy TPL process, \textit{layout decomposition} is usually applied to divide the initial layout into three masks.
Then each mask is implemented through one exposure-etch process, through which the layout can be produced.
In initial layout, two features with distance less than minimum coloring distance $d_{min}$ should be assigned into different masks.
One \textit{conflict} occurs when two features whose spacing is less than $d_{min}$.
Sometimes the conflict can be also resolved by inserting \textit{stitch} to split a feature into two touching parts.

TPL layout decomposition problem with conflict and stitch minimization has been well studied in the past few years 
\cite{TPL_SPIE08_Cork,TPL_SPIE2011_Ghaida,TPL_ICCAD2011_Yu,TPL_DAC2012_Fang,TPL_ICCAD2012_Tian,TPLEC_SPIE2013_Yu,TPL_DAC2013_Kuang,TPL_ICCAD2013_Yu}.
However, most existing work suffers from one or more of the following drawbacks.
(1) Because TPL layout decomposition problem is NP-hard \cite{TPL_ICCAD2011_Yu}, most of the decomposers are based on approximation or heuristic methods,
thus some extra conflicts may be reported \cite{TPL_ICCAD2012_Tian}.
(2) For each design, since the library only contains fixed number of standard cells, layout decomposition would contain lots of redundant works.
For example, if one cell is applied hundreds of times in a single design, it would be decomposed hundreds of times during layout decomposition.
(3) Successfully carrying out these decomposition techniques requires the input layouts to be TPL-friendly.
However, since all these decomposition techniques are applied at post-place/route stage, where all the design patterns are already fixed,
they lack the abilities to resolve some native TPL conflict patterns, e.g., four-clique conflicts.

\begin{figure}[tb]
  \centering
  \subfigure[] {\includegraphics[width=0.10\textwidth]{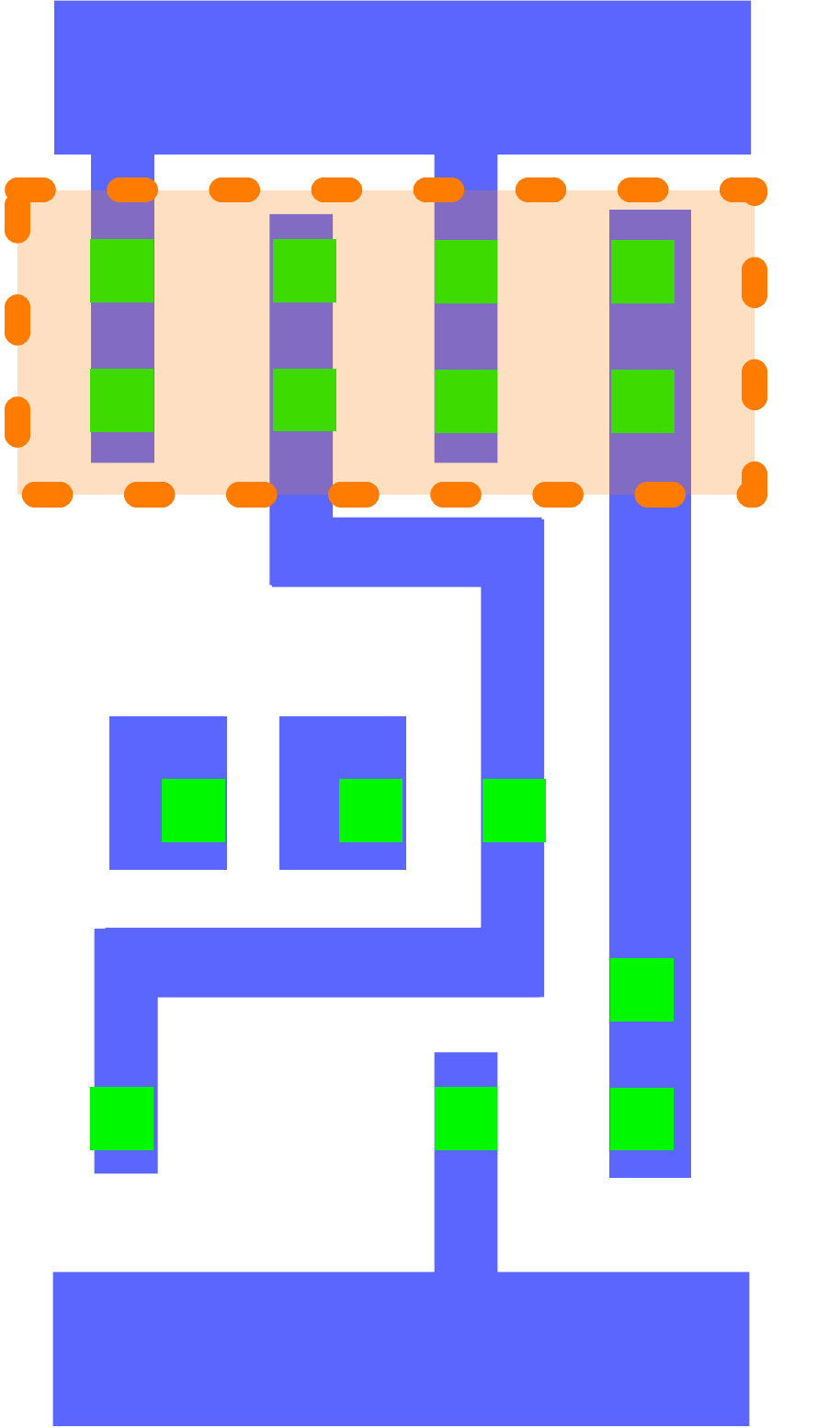}}
  \hspace{.2in}
  \subfigure[] {\includegraphics[width=0.25\textwidth]{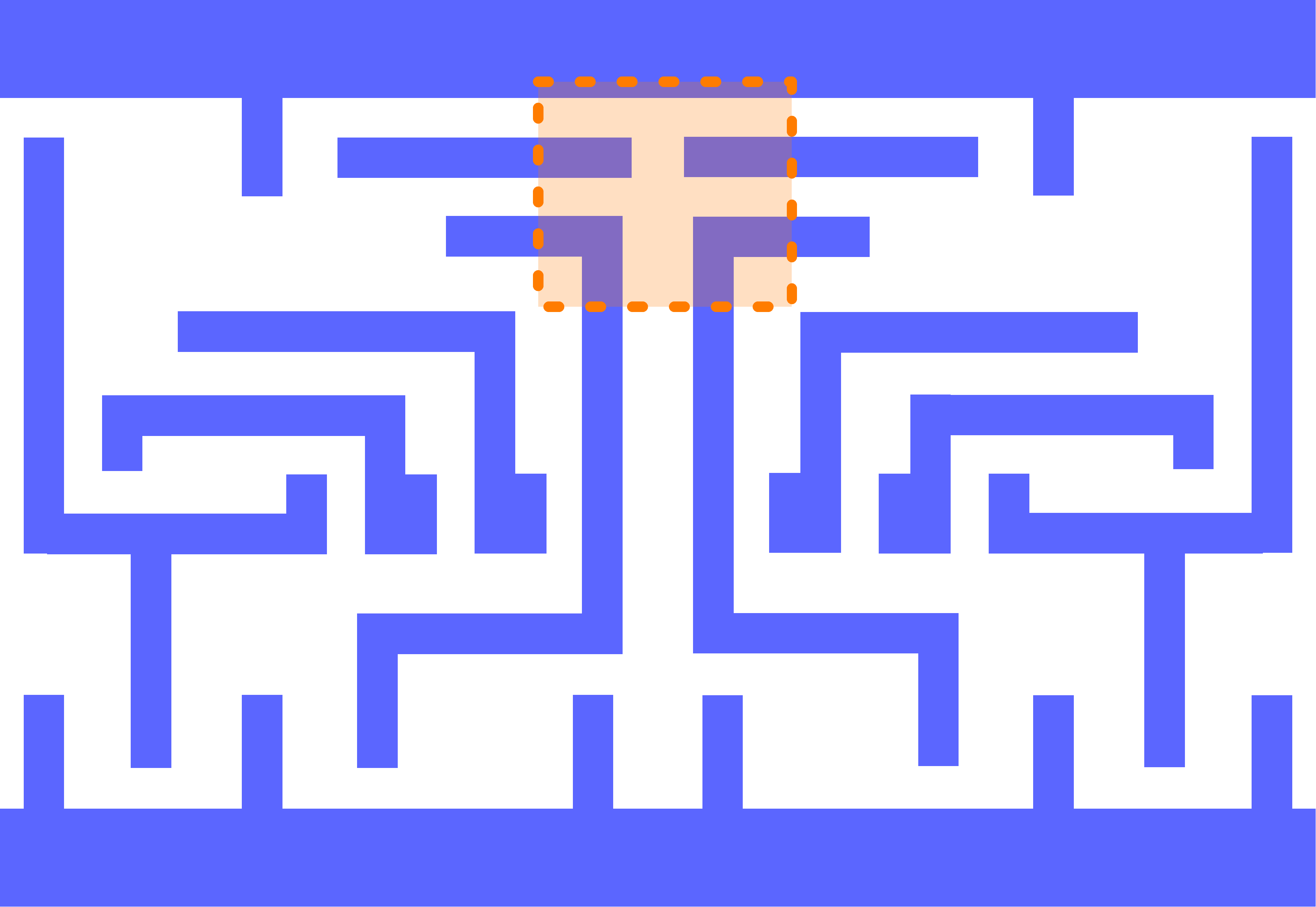}}
  \caption{~Native conflicts from (a) contact layer within a standard cell;~(b) M1 layer between adjacent standard cells.}
  \label{fig:native_conflict}
  \vspace{-.1in}
\end{figure}

It is observed that the most hard-to-decompose patterns originate from contact and M1 layers.
Fig. \ref{fig:native_conflict} shows two common native TPL conflicts in contact layer and M1 layer, respectively.
As shown in Fig. \ref{fig:native_conflict}(a), contact layout within the standard cell may generate some 4-clique patterns, which is indecomposable.
Meanwhile, if placement techniques are not TPL friendly, some boundary metals may introduce native conflicts (see Fig. \ref{fig:native_conflict}(b)).
Since redesigning indecomposable patterns in the final layout requires high ECO efforts,
generating TPL-friendly layouts, especially in the early design stage, becomes urgent and pivotal.
Through these two examples, we can see that both TPL aware standard library design and TPL aware placement are necessary to avoid such indecomposable patterns in final layout.

Liebmann et al. in \cite{DPL_SPIE2011_Liebmann} proposed some guidelines to enable DPL friendly standard cell design and placement.
Besides, there exist several placement studies toward different manufacturing process targets \cite{DFM_ISPD07_Hu,DFM_ICCAD09_Gupta,DFM_SPIE2013_Gao}.
Recently \cite{DFM_DAC2012_Ma,DFM_ICCAD2012_Lin} proposed TPL aware detailed routing schemes.
However, to our best knowledge, no previous work has addressed TPL compliance at standard cell or placement level.

In this paper, we present a systematic framework to seamlessly integrate TPL constraints in early design stages,
comprehending standard cell conflict removal, standard cell pre-coloring and detailed placement together.
Note that our framework is layout decomposition free, that is, the TPL aware detailed placement can generate optimized positions and color assignment solutions for all cells.
Our main contributions are summarized as follows:
\begin{itemize}
  \item We propose systematic standard cell compliance techniques for TPL and coloring solution generation.
  \item We study the standard cell pre-coloring problem, and propose effective methods.
  \item We present the first systematic study for the TPL aware ordered single row placement, where cell placement and color assignment can be solved simultaneously.
  \item Our framework seamlessly integrate decomposition in each key step, therefore no additional layout decomposition is required.
  \item Experimental results show that our framework can achieve zero conflict, meanwhile can effectively reduce the stitch number.
\end{itemize}

The rest of the paper is organized as follows:
Section \ref{sec:overview} provides preliminaries and overview of our methodologies.
Section \ref{sec:cell} proposes standard cell modification to enable TPL friendly cell layout, with negligible timing impact. 
In Section \ref{sec:coloring} the pre-coloring techniques for each cell are proposed, followed by look-up table construction.
Section \ref{sec:singlerow} and Section \ref{sec:dplace} give details on our TPL aware detailed placement.
Section \ref{sec:result} presents the experiment results, followed by conclusion in Section \ref{sec:conclu}.

\section{Preliminaries}
\label{sec:overview}

\subsection{Row Structure Layout}

Our framework assumes a row-structure layout,
where cells in each row are with the same height,
and power/ground rails are going from the very left to the very right (see Fig. \ref{fig:row_layout}(a)).
Similar assumption was applied in row based TPL layout decomposition \cite{TPL_ICCAD2012_Tian} as well.
The minimum width of metal feature and the minimum spacing between neighboring metal features are denoted as $w_{min}$ and $s_{min}$, respectively.
Besides, we define the minimum spacing between metal features among different rows to be $d_{row}$.
If we further analyze layout patterns in the library, it can be observed the width of power/ground rail is twice the width of metal wire within standard cells \cite{nangate}.
Under the row structure layout, we have the following lemma.

\begin{figure}
  \centering
  \includegraphics[width=0.46\textwidth]{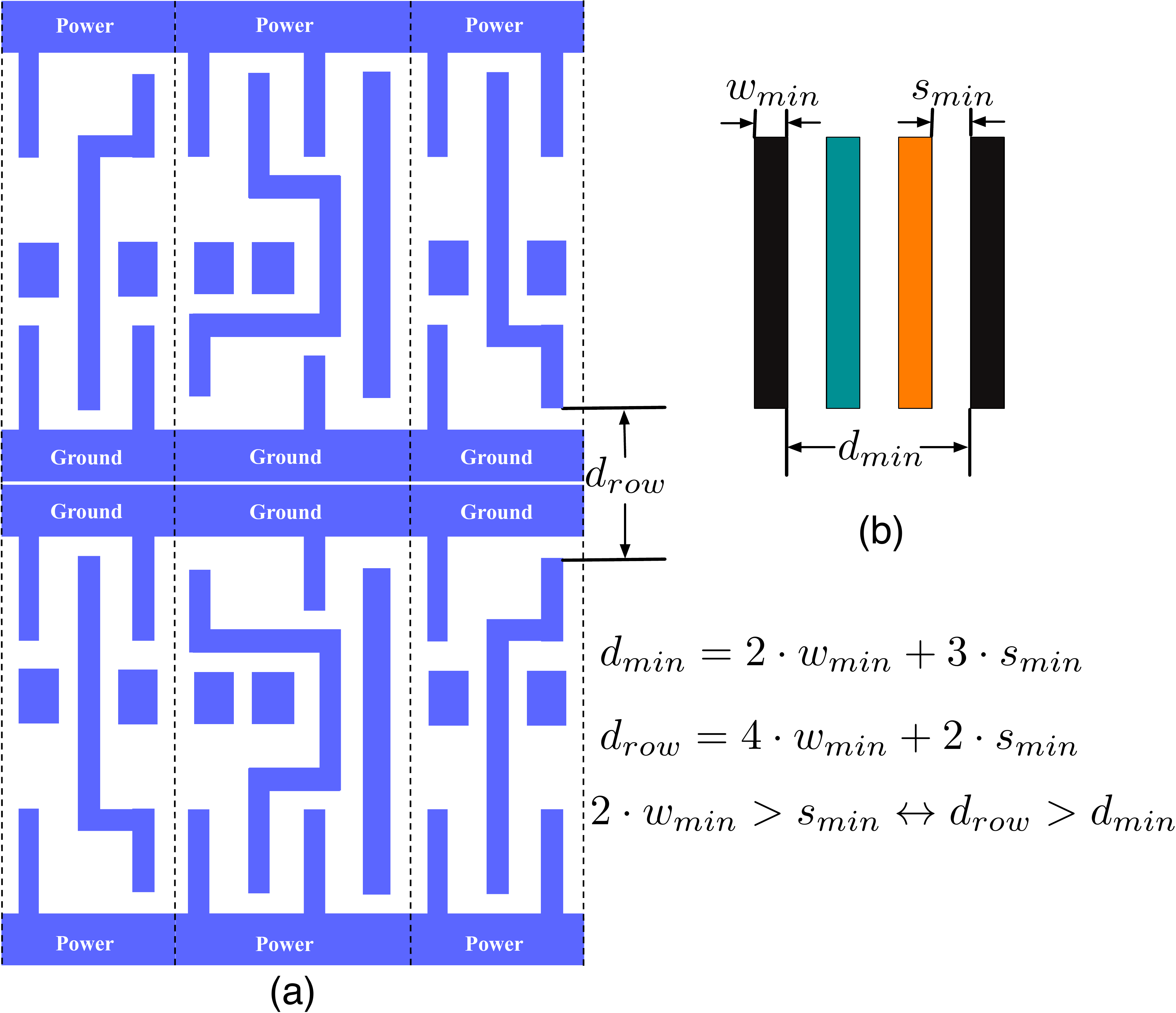}
  \caption{~(a) Minimum spacing between M1 wires among different rows. (b) Minimum spacing between M1 wires with the same color.}
  \label{fig:row_layout}
  \vspace{-.1in}
\end{figure}

\begin{mylemma}
There is no coloring conflict between two M1 wires or contacts that are from different rows.
\end{mylemma}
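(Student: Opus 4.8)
The plan is to reduce the statement to a single geometric inequality, namely that the minimum achievable spacing $d_{row}$ between any two M1 wires (or contacts) lying in different rows is at least the coloring distance $d_{min}$. Once this is established the lemma is immediate, since by definition a conflict edge arises only between features whose spacing is strictly less than $d_{min}$; if every inter-row pair is separated by at least $d_{min}$, no such pair can ever need different colors, and so no inter-row coloring conflict exists.

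First I would isolate the worst case, i.e.\ the inter-row configuration that minimizes the separation. Because adjacent rows share a power/ground rail (Fig.~\ref{fig:row_layout}(a)), every M1 wire in one row is separated from every M1 wire in the neighboring row by that rail. Using the fact recorded just before the lemma that the rail width equals twice the intra-cell wire width, namely $2w_{min}$, together with the minimum-spacing rule $s_{min}$ that must be honored on each side of the rail, the tightest legal arrangement places one wire at distance $s_{min}$ below the rail and the other at distance $s_{min}$ above it. Summing the gap edge-to-edge gives the lower bound $d_{row} \geq s_{min} + 2w_{min} + s_{min} = 2w_{min} + 2s_{min}$. Contacts are confined to the cell interior and are no wider than M1 wires, so any inter-row contact pair (or mixed M1/contact pair) is separated by at least the same rail region; the identical bound therefore applies to them as well.

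I would then close the argument by comparing this bound with $d_{min}$. Invoking the process relation that the coloring distance does not exceed twice the minimum pitch, i.e.\ $d_{min} \leq 2(w_{min}+s_{min}) = 2w_{min} + 2s_{min}$, the chain $d_{row} \geq 2w_{min} + 2s_{min} \geq d_{min}$ follows at once, so every inter-row feature pair lies at least $d_{min}$ apart; Fig.~\ref{fig:row_layout}(b) illustrates the borderline same-color spacing that this comfortably clears.

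The main obstacle I anticipate is precisely the last inequality $2w_{min} + 2s_{min} \geq d_{min}$: it is the only step that is not purely geometric, as it rests on the design-rule relationship between $d_{min}$ and the pair $(w_{min}, s_{min})$ rather than on the row structure itself. A secondary subtlety is confirming that no layout detail can defeat the shared-rail separation, e.g.\ a wire pressed flush against the rail (already accounted for by the $s_{min}$ clearance) or a pair of rows that do not share a rail; the latter I would dispose of by noting that an unshared-rail stacking only increases the vertical separation, so the shared-rail case is genuinely the worst case and the bound is safe.
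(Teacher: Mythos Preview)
Your overall strategy---reduce the lemma to the single inequality $d_{row}\ge d_{min}$---is exactly the paper's, but the two numerical ingredients you plug in are off, and with your values the inequality actually fails.

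First, the bound you assume on the coloring distance is a double-patterning bound, not a triple-patterning one. In TPL the densest 1-D pattern is colored with period three, so same-mask features are separated by two intervening wires and three spaces; the paper accordingly takes $d_{min}=2w_{min}+3s_{min}$. Your assumed relation $d_{min}\le 2(w_{min}+s_{min})=2w_{min}+2s_{min}$ is strictly smaller than this and is therefore false in the TPL setting---so the step you correctly flagged as the weak link does break.

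Second, your $d_{row}$ is too small under the row model the paper uses. In the worst case (mirrored rows, no routing channel) the rail region separating the nearest signal wires of adjacent rows contributes width $4w_{min}$, not $2w_{min}$, giving $d_{row}=4w_{min}+2s_{min}$. With the correct pair of values the needed condition is $4w_{min}+2s_{min}>2w_{min}+3s_{min}$, i.e.\ $2w_{min}>s_{min}$, which is the mild design-rule assumption the paper invokes to finish. With your pair, closing the argument would require $2w_{min}+2s_{min}\ge 2w_{min}+3s_{min}$, i.e.\ $s_{min}\le 0$, so the proof cannot go through as written.
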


\begin{proof}
For TPL, the layout will be decomposed into three masks, which means layout features within minimum coloring distance will be assigned three colors to increase the pitch between neighboring features.
Then, we can see from the Fig. \ref{fig:row_layout}, the minimum spacing between M1 features with the same color in TPL is
$d_{min} = 2 \cdot w_{min} + 3 \cdot s_{min}$.
We assume the worst case for $d_{row}$, which means the standard cell rows are placed as mirrored cells and allow for no routing channel. 
Thus, $d_{row} = 4 \cdot w_{min} + 2 \cdot s_{min}$.
We should have $d_{row} >  d_{min}$, which equals $2 \cdot w_{min} > s_{min}$. 
This condition can easily be satisfied for M1 layer.
For the same reason, we can achieve similar conclusion for the contact layer.
\end{proof}
 
Based on the row-structure assumption, the whole layout can be divided into rows, and layout decomposition or coloring assignment can be carried out for each row.
Without loss of generality, for each row the power/ground rails are assigned the color 1 (\textit{default color}).
Then the decomposed results for each row will not induce coloring conflicts among different rows.
In other words, the coloring assignment results for each row can be merged together, without losing optimality.

\subsection{Overall Design Flow}

\begin{figure}[h]
  \centering
  \includegraphics[width=0.40\textwidth]{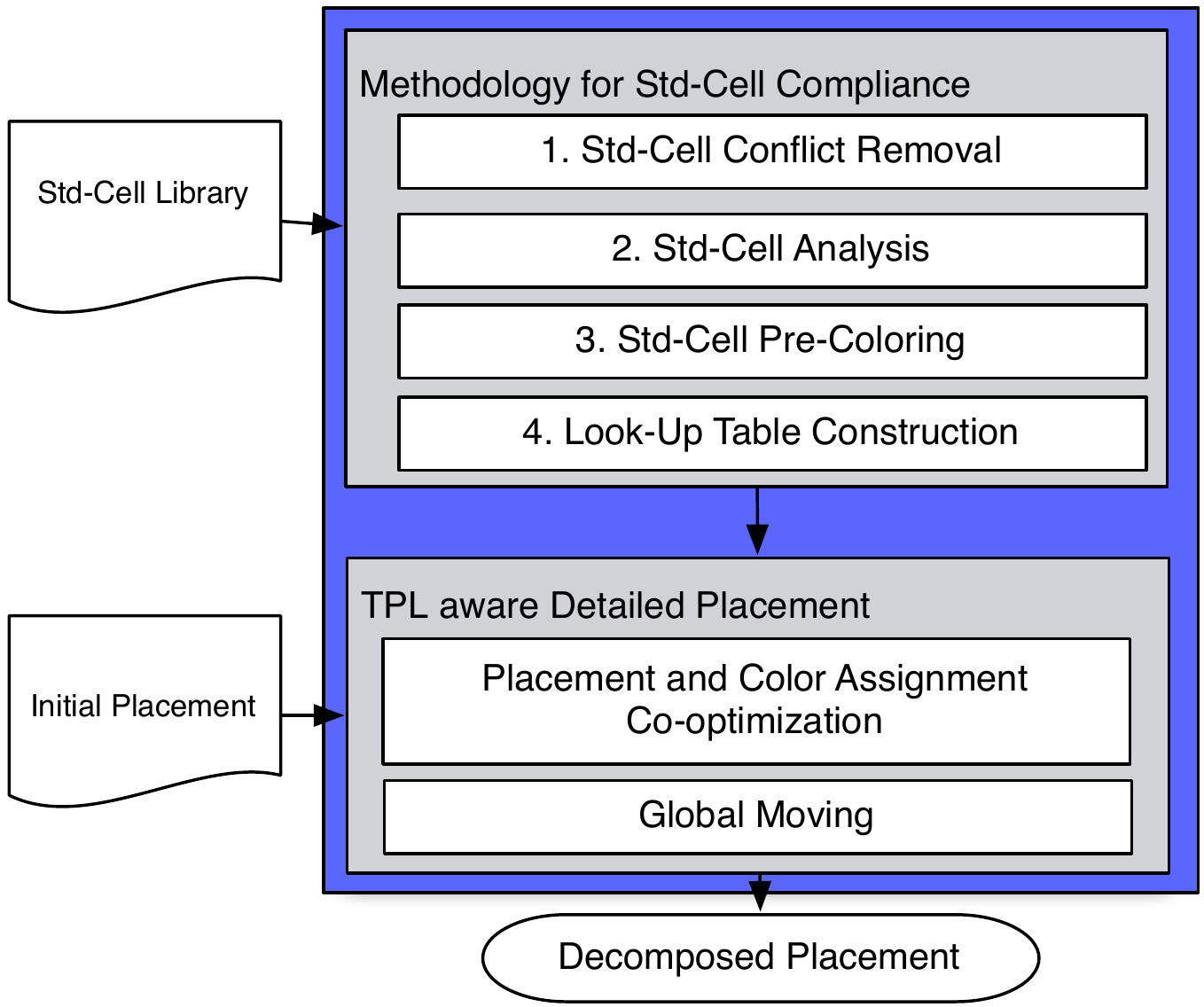}
  \caption{~Overall flow of the methodologies for standard cell compliance and detailed placement.}
  \label{fig:overflow}
  \vspace{-.1in}
\end{figure}

The overall flow of our proposed framework is illustrated in Fig. \ref{fig:overflow}.
It consists of two stages: methodologies for standard cell compliance, and TPL aware detailed placement.
The standard cell compliance techniques include standard cell conflict removal, timing analysis, standard cell pre-coloring, and lookup table generation.
The standard cell compliance techniques ensure that, for each cell, TPL friendly cell layout and a set of pre-coloring solutions will be provided.

Note that since triple patterning lithography constraints are seamlessly integrated into our coherent design flow,
we do not need a separate additional step of layout decomposition.
In other words, the output of our framework is decomposed layouts that have resolved cell placement and color assignment simultaneously.

\section{Standard Cell Compliance}
\label{sec:cell}

It is observed that without considering TPL in standard cell design, the cell library may involve several cells with native TPL conflict (see Fig. \ref{fig:native_conflict} (a) for one example).
The inner native TPL conflict cannot be resolved through either cell shift or layout decomposition.
Since one cell may be applied many times in one single design, such inner native conflict may cause hundreds of coloring conflicts in final layout.
To achieve TPL friendly layout after the physical design flow, we should first ensure the standard cell layout compliance for TPL.
Specifically, we will manually remove all 4-clique conflicts through standard cell modification.
Then, parasitic extraction and SPICE simulation are applied to analyze the timing impact for the cell modification.

\begin{figure}[tb]
  \centering
  \subfigure[] {\includegraphics[width=0.28\textwidth]{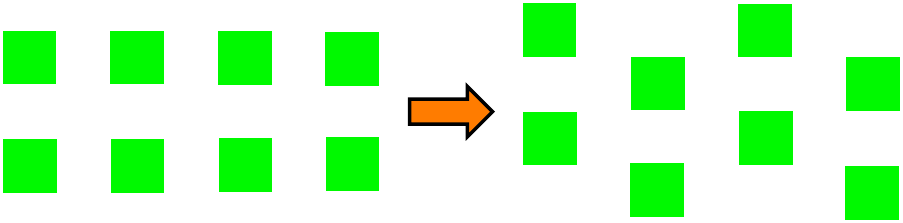}}
  \subfigure[] {\includegraphics[width=0.28\textwidth]{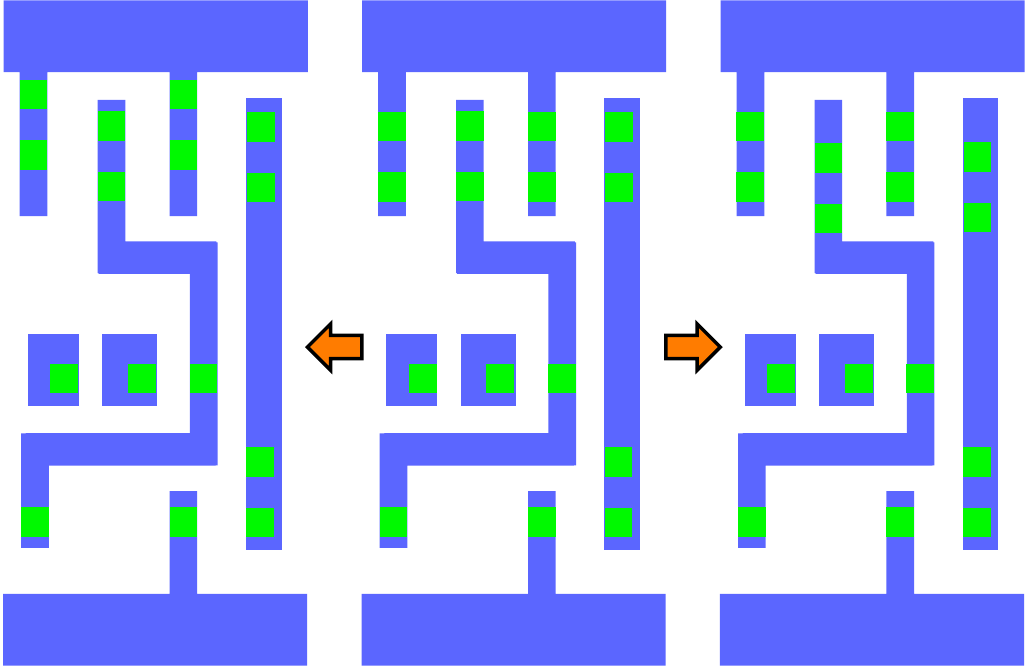}}
  \caption{~Contact layout modification to hexagonal packing. (a) The principle for contact shifting; (b) Demonstration of two options for contact shifting, with original layout in the middle, case 1 on the left and case 2 on the right.}
  \label{fig:via_hexagonal}
\end{figure}

\subsection{Native TPL Conflict Removal}

An example of native TPL conflict is illustrated in Fig. \ref{fig:via_hexagonal},
where four contacts introduce an indecomposable 4-clique conflict structure.
For such cases we modify the contact layout into hexagonal close packing \cite{TPL_SPIE2012_Lucas},
which also allows for the most aggressive cell area shrinkage for TPL friendly layout.
Note that after modification, the layout still needs to satisfy the design rules.
From the layout analysis of different cells, we have various ways to remove such 4-clique conflict. 
As shown in Fig. \ref{fig:via_hexagonal}, with slight modification to original layout, we can either choose to move contacts connected with power or ground rails or shift contacts on the signal paths of the cell. 
We call these two options case 1 and case 2 respectively, both of which will lead to TPL friendly standard cell layout.

Generally, the cell layout design flexibility is beneficial for resolving conflicts between cells when they are placed next to each other. 
However, from a circuit designer's perspective, we want to achieve little timing variation among various layout styles of a single cell. 
Therefore, we need simulation results to demonstrate negligible timing impact from layout modification. 

\begin{figure}
  \centering
  \includegraphics[width=0.4\textwidth]{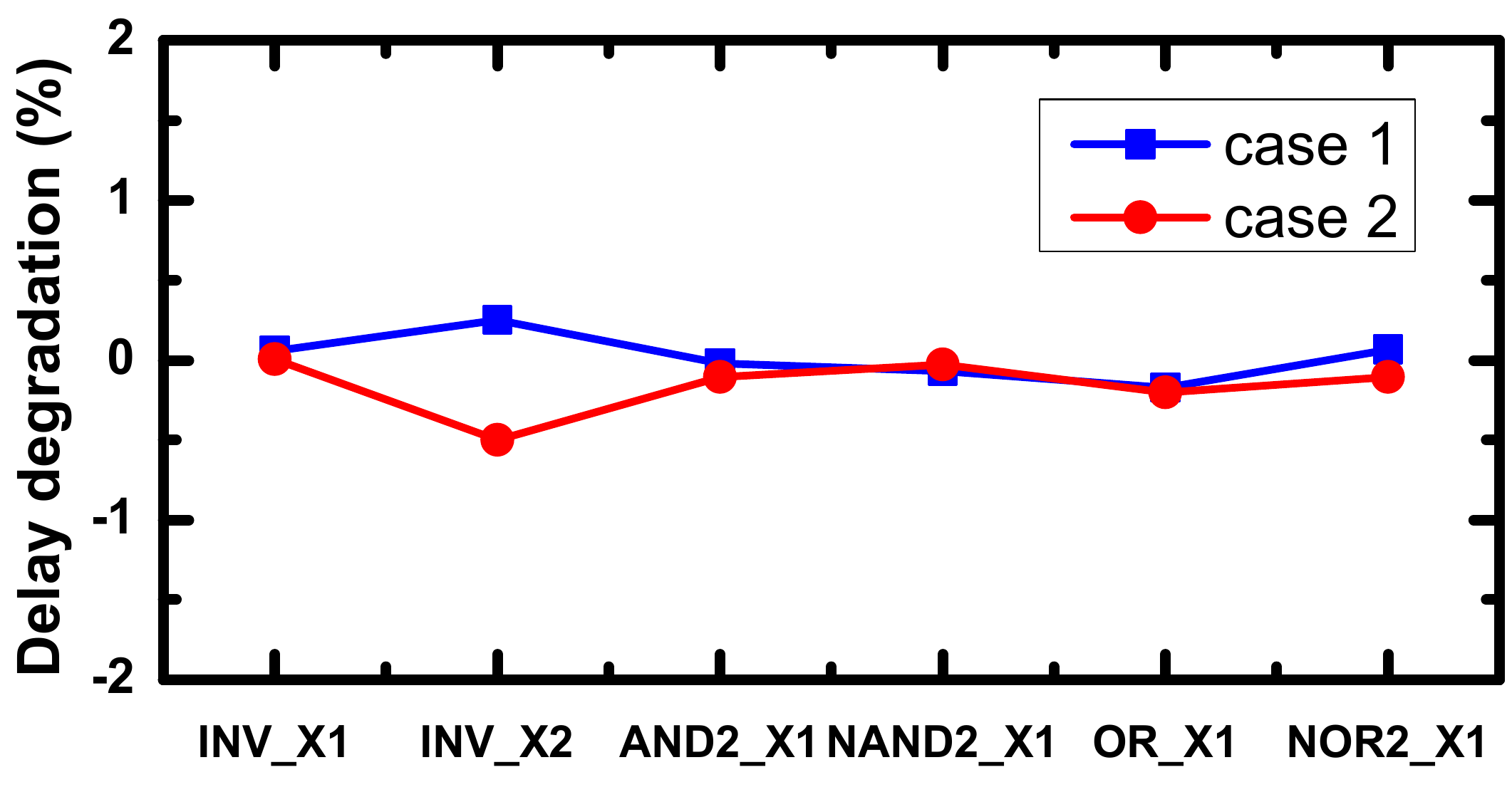}
  \caption{~The timing impact from layout modification for different types of gates, including case 1 and case 2}
  \label{fig:std-cell}
\end{figure}

\subsection{Timing Characterization}

A Nangate 45nm Open Cell Library \cite{nangate} has been scaled to 16nm technology node.
After native TPL conflict detection and layout modification, we carry out the standard cell level timing analysis.
Calibre xRC \cite{Calibre} is used to extract parasitic information of the cell layout.
For each cell, we have original and modified layout with case 1 and case 2 options.
From extraction results, we can see that the source/drain parasitic resistance of transistors varies with the position of contacts, which is the direct impact from layout modification.
We use SPICE simulation to characterize different types of gates, which is based on 16nm PTM model \cite{ptm_16nm}.
Then, we can get the propagation delay of each gate, which is the average of rising and falling delay. 
We pick up six most commonly used cells to measure the relative changes of propagation delay due to layout modification
(see Fig. \ref{fig:std-cell}).
It is clearly observed that, for both case 1 and case 2, the timing impact will be within 0.5\% of the original propagation delay of gates,
which is assumed to be insignificant timing variation.
Based on case 1 or case 2 option, we will remove all conflicts among cells of the library with negligible timing impact.
Then we can ensure the standard cell compliance for triple patterning lithography.

\section{Standard Cell Pre-Coloring}
\label{sec:coloring}

For each type of standard cell, after removing the native TPL conflicts, we provide one set of pre-coloring solutions,
which can be prepared as a supplement to the library.
In this section we introduce the pre-coloring problem,
and propose general algorithms to solve it.

\subsection{Problem Formulation}

At first glance, standard cell pre-coloring is similar to cell level layout decomposition.
However, different from the traditional layout decomposition, pre-coloring could have more than one solution for each cell.
It is observed that for some complex cell structure, if we exhaustively enumerate the possible coloring, it would have thousands of solutions.
Large solution size would impact the performance of our whole flow (see analysis in Section \ref{sec:singlerow}).
To provide high quality pre-coloring solutions, meanwhile keep the solution size as small as possible, we define immune feature and redundant coloring solutions as follows.

\begin{mydefinition}[Immune feature]
In one standard cell, an inside feature that would not conflict with any outside feature is defined as an immune feature.
\end{mydefinition}

It is easy to see that for one feature, if its distances to both vertical boundaries are larger than $d_{min}$, its color would not conflict with any other cells.
Then this feature is an immune feature.

\begin{mydefinition}[Redundant coloring solutions]
If two coloring solutions are only different at the immune features, these two solutions are redundant to each other.
\end{mydefinition}

\begin{problem}[Standard Cell Pre-Coloring]
Given the input standard cell layout, and the maximum allowed stitch number $maxS$,
we seek to search all coloring results that with stitch number no more than $maxS$.
Meanwhile, all redundant coloring solutions should be removed.
\end{problem}

For example, given an AND2X1 cell as shown in Fig. \ref{fig:AND2X1_DG}(a),
if $maxS$ is set as $1$, the pre-coloring problem would search eight solutions (4 solutions with 0 stitch and 4 solutions with 1 stitch, see Fig. \ref{fig:AND2X1_solutions}).

\begin{figure}
  \vspace{-.1in}
  \centering
  \subfigure[] {\includegraphics[width=0.16\textwidth] {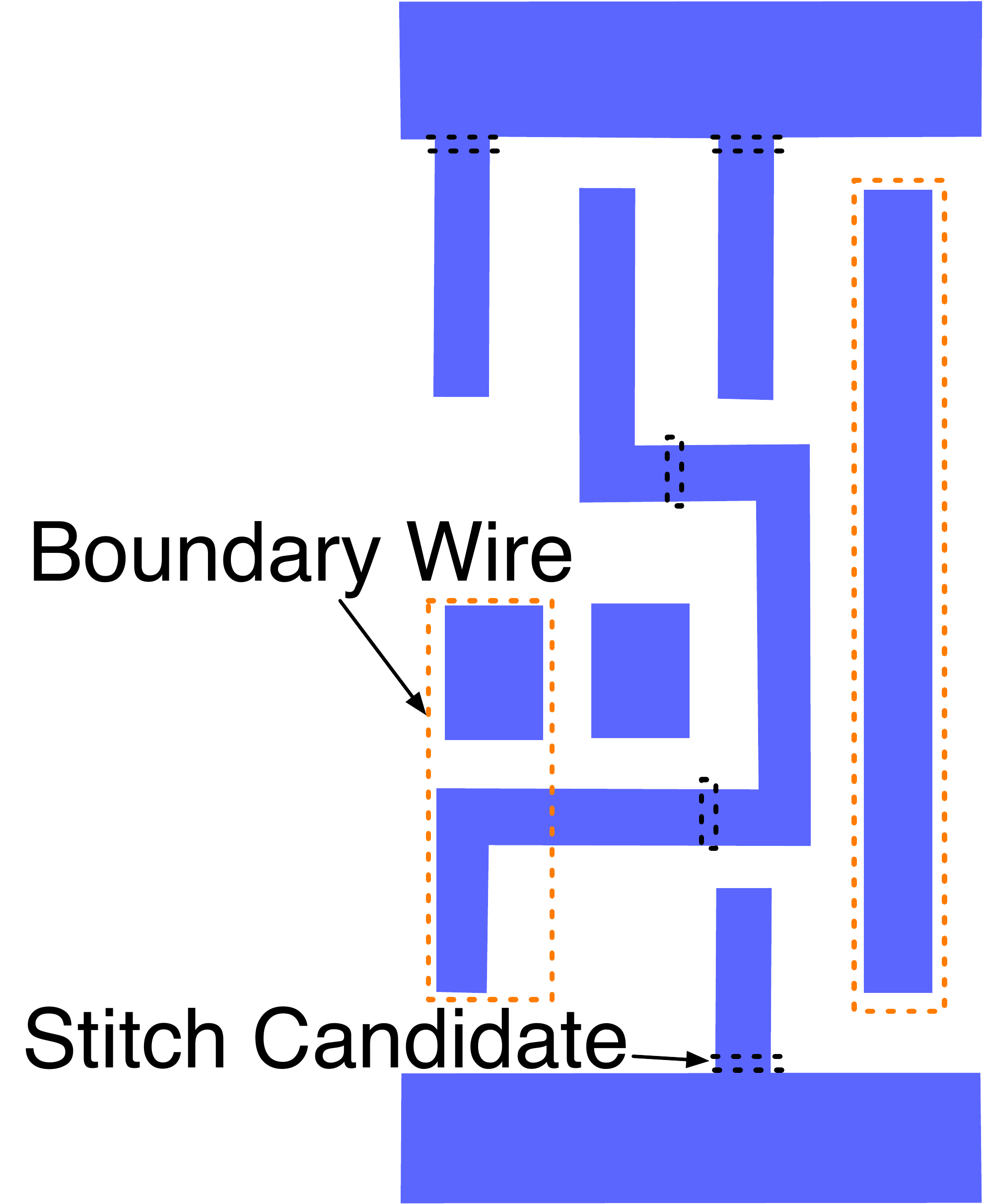}}
  \hspace{.1in}
  \subfigure[] {\includegraphics[width=0.095\textwidth]{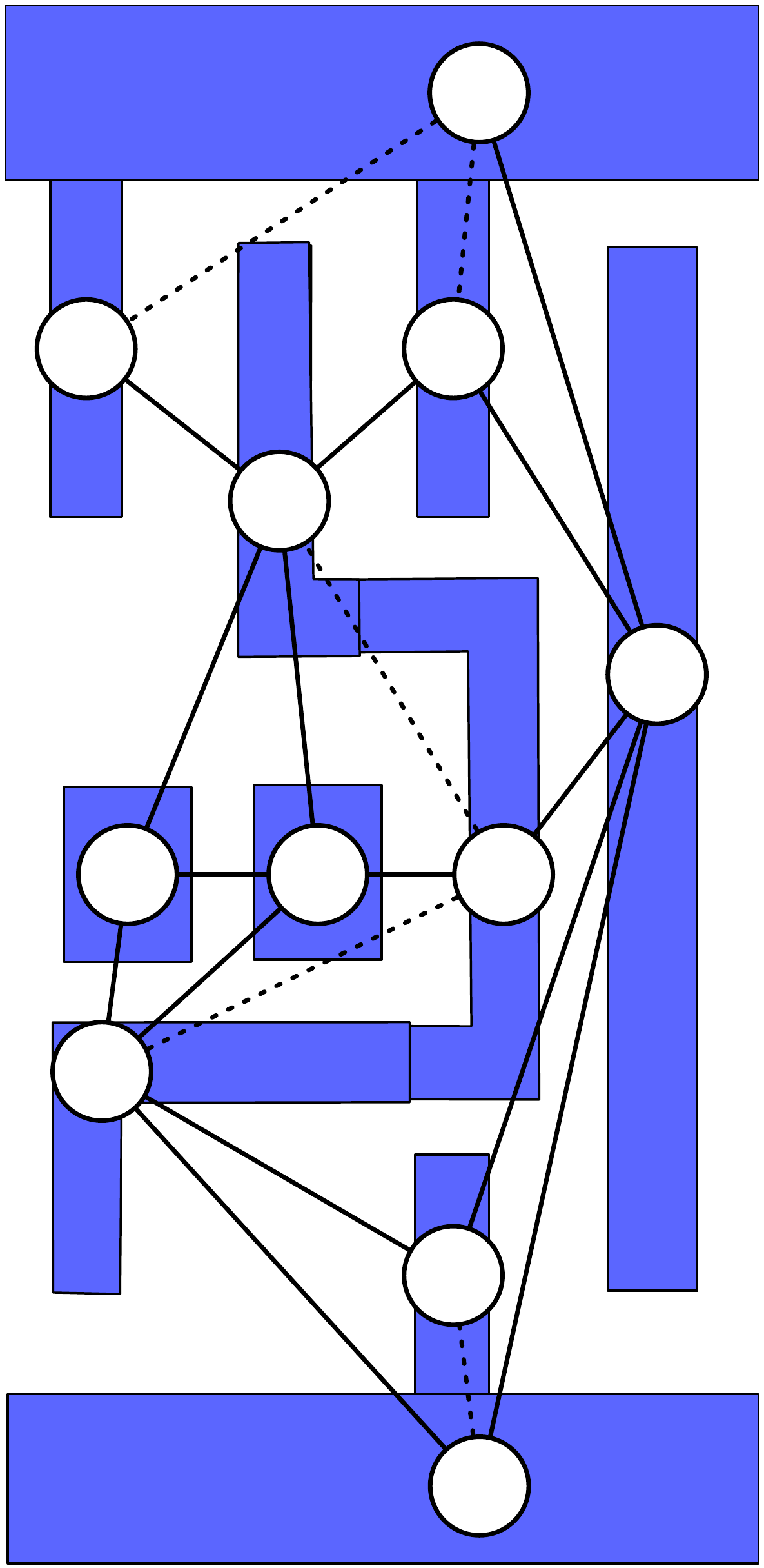}}
  \hspace{.1in}
  \subfigure[] {\includegraphics[width=0.095\textwidth]{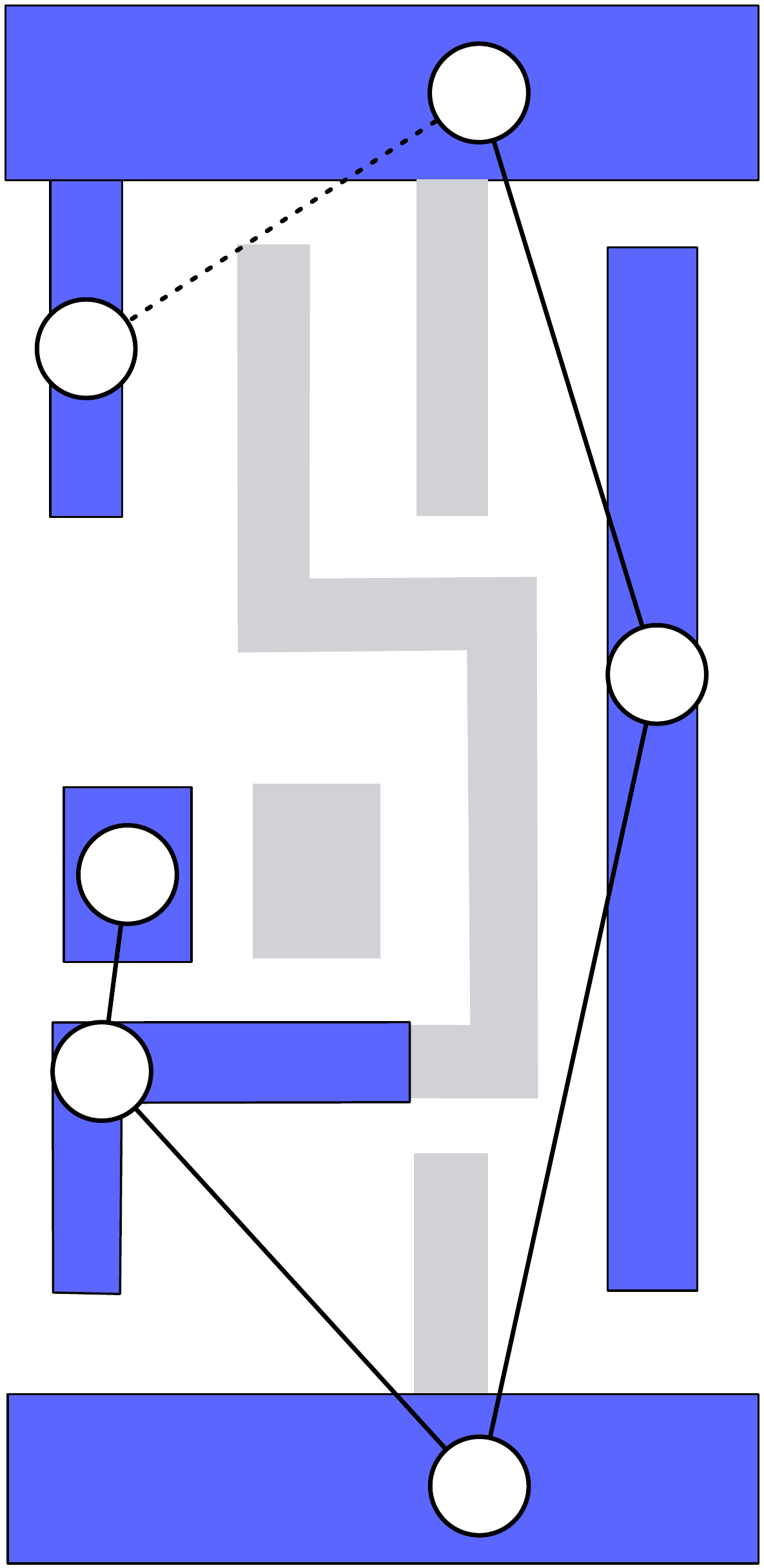}}
  \hspace{.2in}
  \caption{~Constraint graph construction and simplification.~(a) Input layout and all stitch candidates.
  ~(b) Constraint graph (CG) where solid edges are conflict edges and dash edges are stitch edges.
  ~(c) The simplified constraint graph (SCG) after removing immune features.}
  \label{fig:AND2X1_DG}
\end{figure}

\begin{figure}[tb]
  \centering
  \hspace{-.1in}
  \subfigure[] {\includegraphics[width=0.24\textwidth]{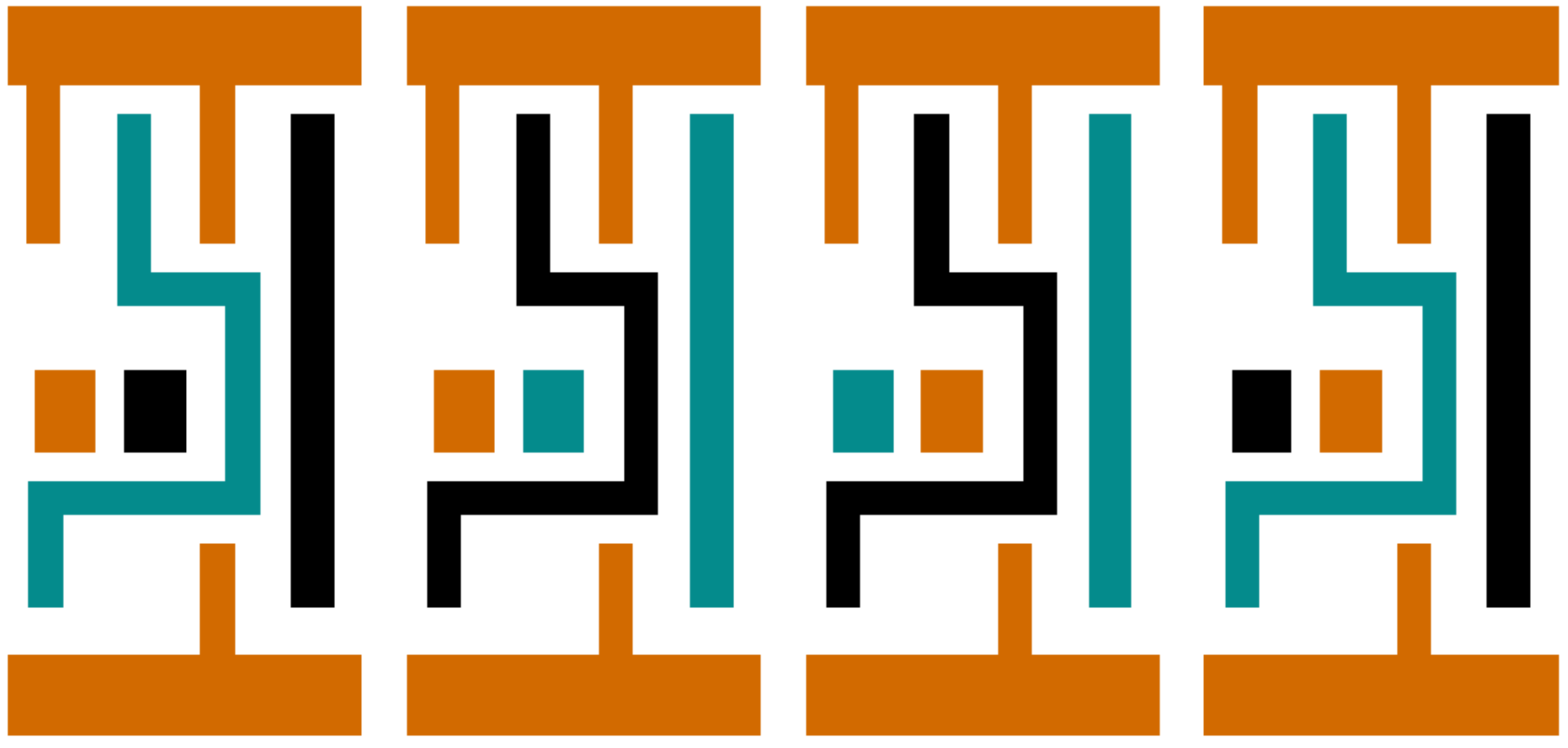}}
  \subfigure[] {\includegraphics[width=0.24\textwidth]{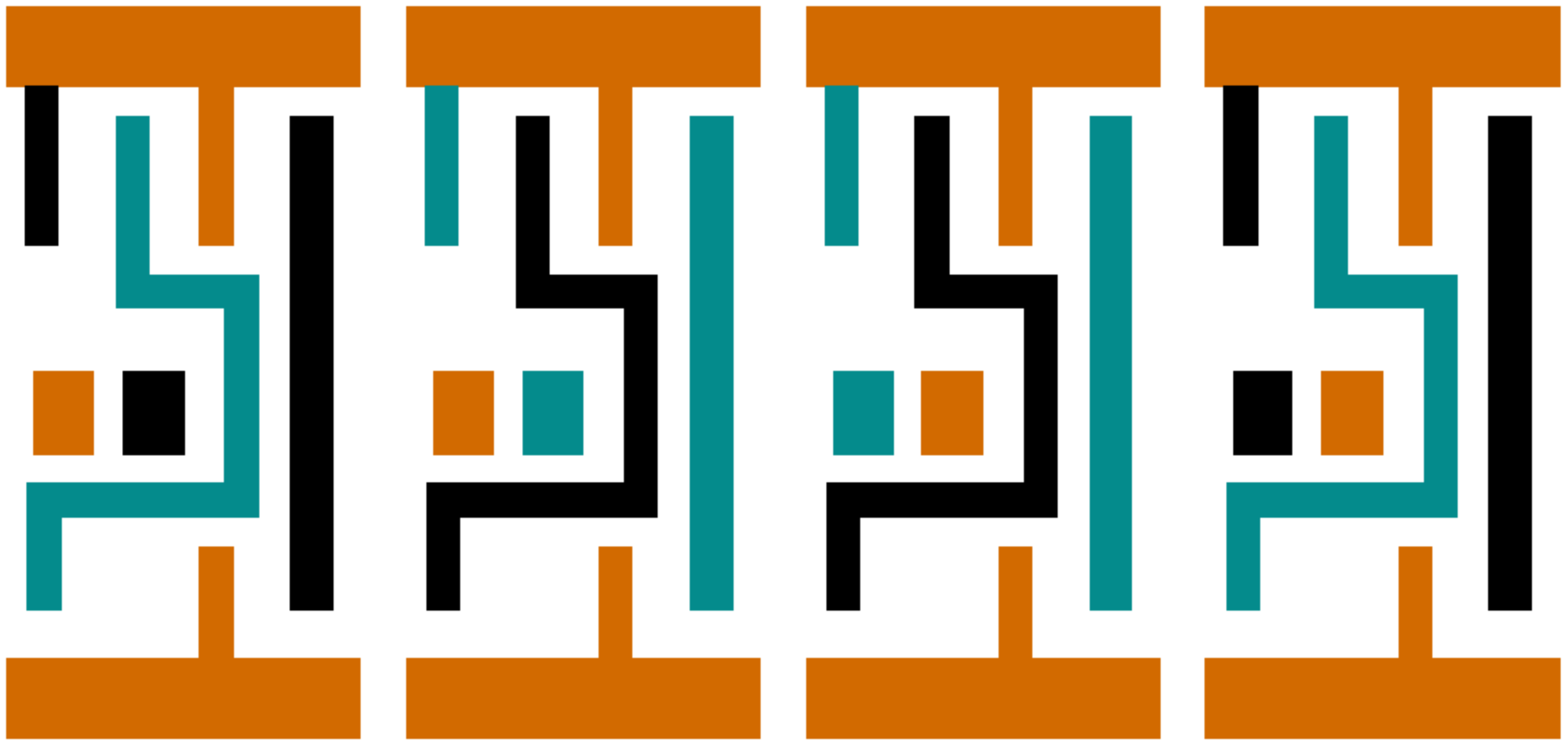}}
  \hspace{-.1in}
  \caption{~AND2X1 pre-coloring solutions with (a) 0 stitch; (b) 1 stitch.}
  \label{fig:AND2X1_solutions}
  \vspace{-.1in}
\end{figure}

Given the input standard cell layout, all the stitch candidates are captured through wire projection \cite{TPL_ICCAD2011_Yu}\cite{TPL_ICCAD2012_Tian}.
An example of AND2X1 cell is illustrated in Fig. \ref{fig:AND2X1_DG} (a), where five stitch candidates are captured for M1 layer.
Note that we forbid stitch on small features, e.g., contact, due to printability issue.
In addition, different from previous stitch candidate generation, we forbid the stitch on boundary metal wires (see the red boxes in Fig. \ref{fig:AND2X1_DG} (a)).
The reason is based on the observation that boundary stitches tend to cause indecomposable patterns between two cells.
Then an undirected constraint graph (\textit{CG}) \cite{TPL_ICCAD2011_Yu} is constructed to represent all input features and all the stitch candidates.
One feature in the layout is divided into two vertices in the graph if one stitch candidate is introduced.
The CG contains two sets of edges, i.e., the conflict edges and the stitch edges, respectively.
Fig. \ref{fig:AND2X1_DG} (b) shows the corresponding CG.


\subsection{SCG Solution Enumeration}

Since in CG some vertices represent the immune features, to avoid redundant coloring solutions, these features are temporarily removed.
We denote the remained graph as \textit{simplified constraint graph} (SCG).
A backtracking algorithm \cite{book2002combinatorial} is proposed to the simplified CG to enumerate all possible coloring solutions.
For example, given the SCG shown in Fig. \ref{fig:AND2X1_DG}(c), there are 24 solutions.
It should be mentioned that since all power/ground rails are assigned default color, the colors of corresponding vertices are assigned before the backtracking process.

\subsection{CG Solution Verification}

Until now we have enumerated all coloring solutions for simplified constraint graph (SCG).
However, under the maximum stitch number $maxS$ constraint,
not all the SCG solutions can achieve legal layout decomposition in initial constraint graph (CG).
Therefore, CG solution verification is proposed to each generated solution.
Since SCG is a sub-set of CG, the verification can be viewed as layout decomposition with pre-colored features on SCG.
If a coloring solution for whole CG can be found with stitch number less than $maxS$, it would be stored as one pre-coloring solution.

\begin{algorithm}[htb]
\caption{CG Solution Verification}
\label{alg:verification}
\begin{algorithmic}[1]
  \Require set of initial coloring solutions $S'$ for SCG;
  \State Generate corresponding coloring solutions $S$ for CG;
  \For{ each coloring solution $s_i \in S$}
    \State $minCost \leftarrow \infty$;
    \State BRANCH-AND-BOUND($0, s_i$);
    \If {$minCost < maxS$}
      \State Output $s_i$ as legal pre-coloring solution;
    \EndIf
  \EndFor

  \Statex
  \Function{BRANCH-AND-BOUND}{$t, s_i$}
    \If {t $\ge \textrm{size}[s_i]$}
      \If { GET-COST( ) $< minCost$ }
        \State $minCost \leftarrow$ GET-COST();
      \EndIf
    \ElsIf {LOWER-BOUND( ) $> minCost$}
      \State Return;
    \ElsIf {$s_i[t] \ne -1$} 
      \State BRANCH-AND-BOUND($t+1, s_i$);
    \Else \Comment{$s_i[t] = -1$}
      \For {each available color $c$};
        \State $s_i[t] \leftarrow c$;
        \State BRANCH-AND-BOUND($t+1, s_i$);
        \State $s_i[t] \leftarrow -1$;
      \EndFor
    \EndIf
  \EndFunction
\end{algorithmic}
\end{algorithm}

As shown in Algorithm \ref{alg:verification}, the CG solution verification is based on branch and bound \cite{book2002combinatorial}.
Given the coloring solutions $S' = \{s_1', s_2' \dots s_n'\}$ for SCG,
at the beginning the corresponding coloring solutions $S = \{s_1, s_2, \dots, s_n\}$ for CG are generated (line $1$).
Then we iteratively check each coloring solution $s_i$ (lines $2-6$).
For one coloring solution $s_i$, if vertex $t$ belongs to SCG, $s_i[t]$ should be already assigned one legal color.
If $t$ does not belong to SCG, $s_i[t] \leftarrow -1$.
The BRANCH-AND-BOUND() algorithm traverses the decision tree with a depth first search (DFS) methods (lines $7-19$).
For each vertex $t$, if $s_i[t]$ has been assigned one legal color in SCG, we skip $t$ and travel to the next vertex.
Otherwise, every legal color would be assigned to $t$ before traveling to the next vertex.
Different from exhaustive search, search space can be effectively reduced through pruning process (lines $11-12$).
The function LOWER-BOUND() is to get lower bound by calculating current stitch number.
Note that if one conflict is found, then the function returns a large value.
Before checking any legal color of vertex $t$, we calculate its lower bound first.
If LOWER-BOUND() is larger than $minCost$, we shall not branch from $t$, since all the children solutions will be of higher cost than $minCost$.
Through the travel, all vertices have been assigned legal colors, stored in $s_i$.
After the travel, if $minCost \le maxS$, then $s_i$ is one of the pre-coloring solutions (lines $5-6$).

It shall be noted that although other optimal layout decomposition techniques, like integer linear programming (ILP), may be modified as the verification engine, our branch and bound based method is easy to implement and effective for standard cell level problem size.
Even for the most complex cell, SCG solution enumeration and CG solution verification can be finished in 5 seconds.

\subsection{Look-Up Table Construction}
\label{sec:lut}

For each cell $c_i$ in the library, we have generated a set of pre-coloring solutions $S_i = \{s_{i1}, s_{i2}, \dots, s_{iv}\}$.
We further pre-compute the decomposability of each cell pair, and store them in a lookup table.
For example, if two cells $c_i, c_j$ are assigned with $p-$th and $q-$th coloring solutions, respectively,
then LUT$(i, p, j, q)$ would store the minimum distance required when $c_i$ is to the left of $c_j$.
That is, if two colored cells can be legally abutted to each other, the corresponding value would be $0$.
Otherwise, the value would be the distance required to keep two cells decomposable.
Meanwhile, for each cell, its stitch number in different coloring solutions are also stored.
It shall be noted that during the Lookup table construction, the cell flipping is considered, and related values are stored as well.

\section{TPL aware Single Row Placement}
\label{sec:singlerow}

In this section we solve a single row placement, where the orders of all cells on the row are determined.
When TPL process is not considered, this row based design problem is called \textit{Ordered Single Row} (OSR) problem,
which has been well studied \cite{PLACE_ASPDAC99_Kahng, PLACE_DATE00_Brenner, PLACE_ICCAD05_Kahng}.
Here we revisit the OSR problem with the TPL process consideration.

\subsection{Problem Formulation}
To formalize the OSR problem under TPL process, we introduce the following notations.
We consider an input single row as $m$ ordered sites $R = \{r_1, r_2, \dots, r_m\}$, and an input $n$ movable cells $C = \{c_1, c_2, \dots, c_n\}$ whose order is determined.
That is, $c_i$ is to the left of $c_j$, if $i < j$.
Each cell $c_i$ has $v_{i}$ different coloring solutions.
A \textit{cell-color pair} $(i, p)$ denotes that cell $c_i$ is assigned to the $p-$th color solution, where $p \in [1, v_i]$.
Meanwhile, $s(i, p)$ gives the corresponding stitch number for $(i, p)$.
The horizontal position of cell $c_i$ is given by $x(c_i)$, and the cell width is given by $w(c_i)$.
All the cells in other rows are with fixed positions.
A single row placement is \textit{legal} if and only if any two cells $c_i, c_j$ meets the following non-overlap constraint:
\begin{displaymath}
  x(c_i) + w(c_i) + \textrm{LUT}(i, p, j, q) \le x(c_j), \ \ \textrm{if } (i, p) \& (j, q)
\end{displaymath}
where LUT($i, p, j, q$) is corresponding LUT value mentioned in Section \ref{sec:lut}.
Based on all these notations, we define the TPL aware Ordered Single Row (\textit{TPL-OSR}) problem as follows.

\begin{figure}[tb]
  \centering
  \subfigure[] {\includegraphics[width=0.36\textwidth]{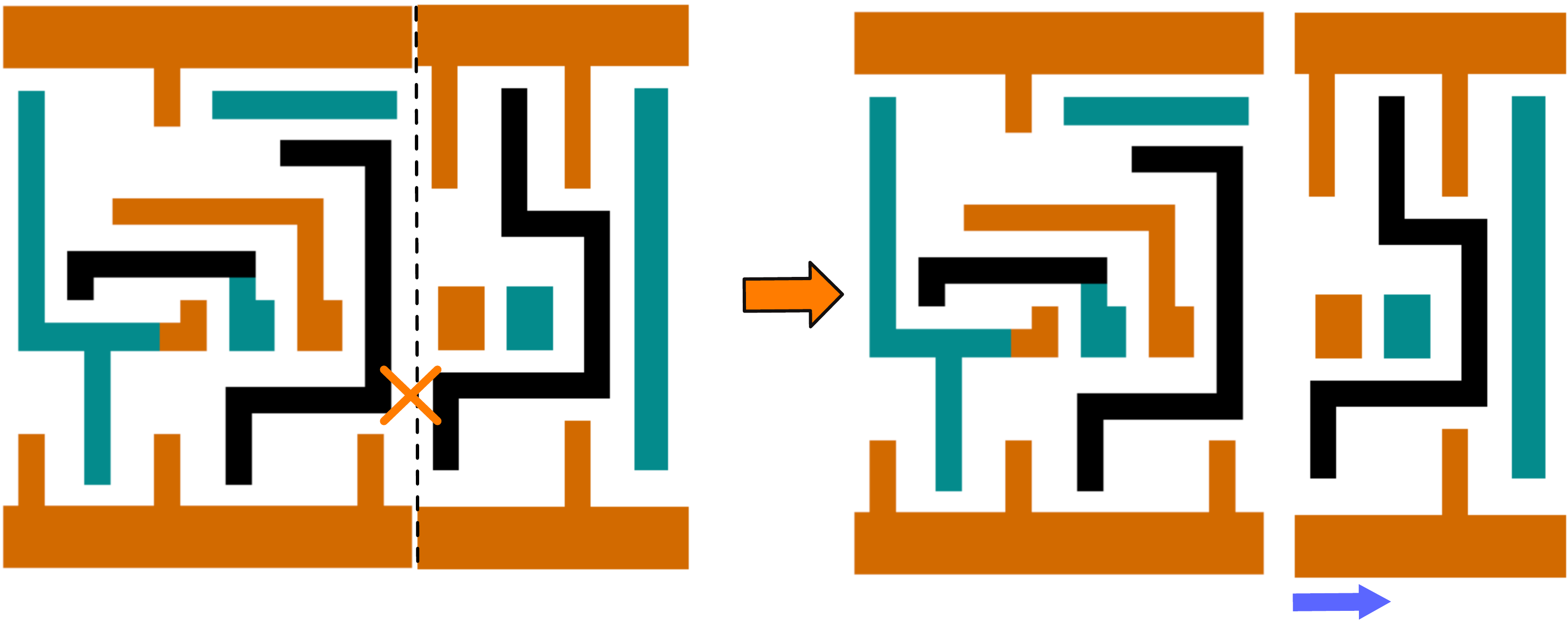}}
  \subfigure[] {\includegraphics[width=0.36\textwidth]{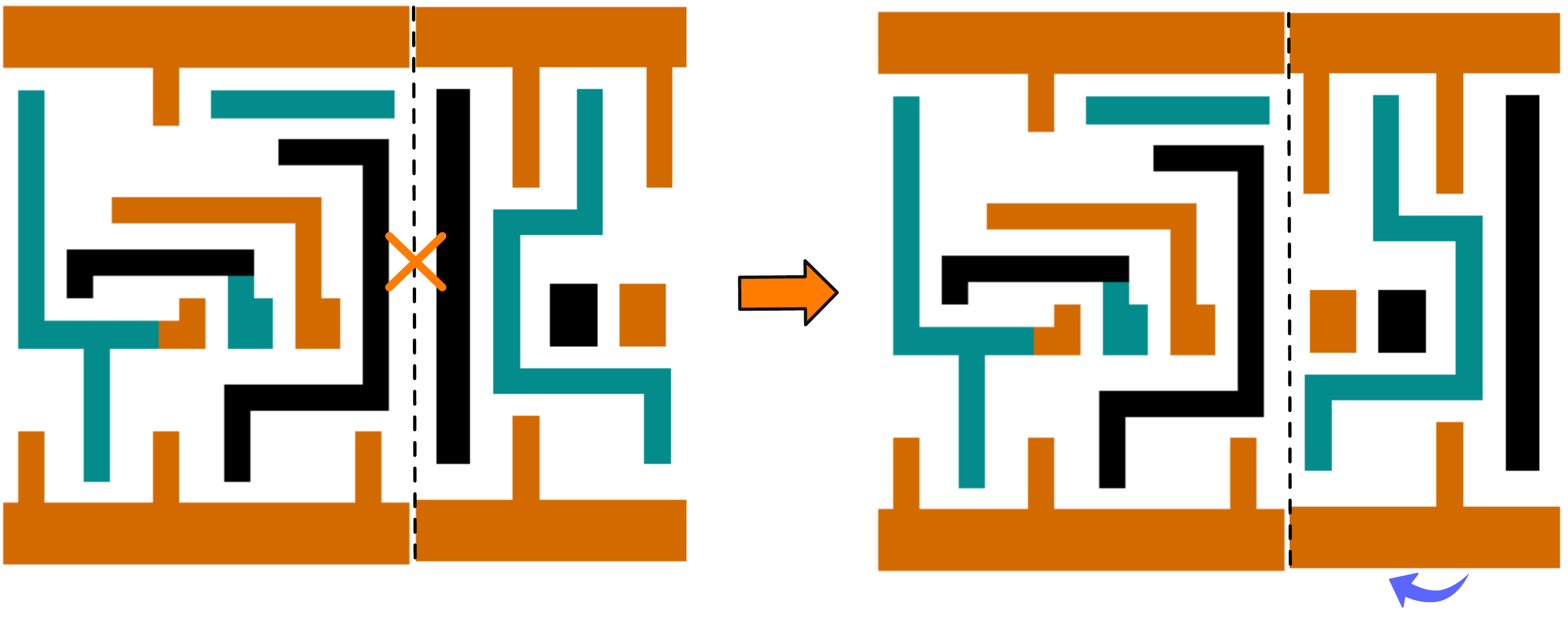}}
  \caption{~Two techniques for removing conflicts during placement. (a) Flip the cell; (b) Shift the cell.
  }
  \label{fig:between_cells_metal}
\end{figure}

\begin{problem}[TPL aware Ordered Single Row Problem]
Given a single row placement, we seek a legal placement and cell color assignment, so that the half-perimeter wire-length (HPWL) of all nets and the total stitch number are minimized.
\end{problem}

Compared with traditional OSR problem, TPL-OSR problem faces two special challenges:
(1) TPL-OSR not only needs to solve cell placement, but also needs to assign appropriate coloring solutions for cells to minimize the stitch number.
In other words, cell placement and color assignment should be solved simultaneously.
(2) In conventional OSR problem, if the sum of all cell width is less than row capacity, it is guaranteed that there would be one legal placement solution.
However, for TPL-OSR problem, since some extra sites may be spared to resolve coloring conflicts, before coloring assignment we cannot calculate the required site number.

In addition, it shall be noted that compared with conventional color assignment problem, in TPL-OSR the solution space is much larger.
That is, to resolve the coloring conflict between two abutted cells $c_i, c_j$, apart from picking up compatible coloring solutions,
TPL-OSR can seek to flip cells (see Fig. \ref{fig:between_cells_metal} (a)) or shift cells (see Fig. \ref{fig:between_cells_metal} (b)).

\subsection{Graph Model for TPL-OSR}

In this subsection we propose a graph model that correctly captures the cost of HPWL and the stitch number.
Furthermore, we will show that performing a shortest path algorithm on the graph model can optimally solve the TPL-OSR problem.

To consider cell placement and cell color assignment simultaneously, a directed acyclic graph $G = (V, E)$ is constructed.
The graph $G$ is with vertex set $V$ and edge set $E$.
$V = \{\{0, \dots, m\} \times \{0, \dots, N\}, t\}$, where $N = \sum_{i=1}^{n} v_i$.
The vertex in the first row and the first column is defined as vertex $s$. 
We can see that each column corresponds to one site's start point, and each row is related to one specified color assignment of one cell.
Without loss of generality, we label each row as $r(i, p)$ if it is related to cell $c_i$ with $p-$th coloring solution.
The edge set $E$ is composed of three sets of edges: horizontal edges $E_h$, ending edges $E_e$, and diagonal edges $E_d$.
\begin{align}
E_h =  & \{(i, j-1) \rightarrow (i, j) | 0 \le i \le N, 1 \le j \le m\}   \notag\\
E_e =  & \{(i, m) \rightarrow t | i \in [1, N]\}                              \notag\\
E_d =  & \{(r(i-1,p), k) \rightarrow (r(i,q), k+w(c_i)+                 \notag\\
            & LUT(i-1, p, i, q)) | i \in [2, n], p \in [1, v_{i-1}], q \in [1, v_i] \} \notag
\end{align}
We denote each edge by its start and end point.
A legal TPL-OSR solution corresponds to finding a directed path from the vertex $s$ to vertex $t$.
Sometimes one row cannot insert all the cells, therefore ending edges are introduced.
With these ending edges, the graph model can guarantee to find out one path from $s$ to $t$.

\begin{figure}[tb]
  \centering
  \includegraphics[width=0.44\textwidth]{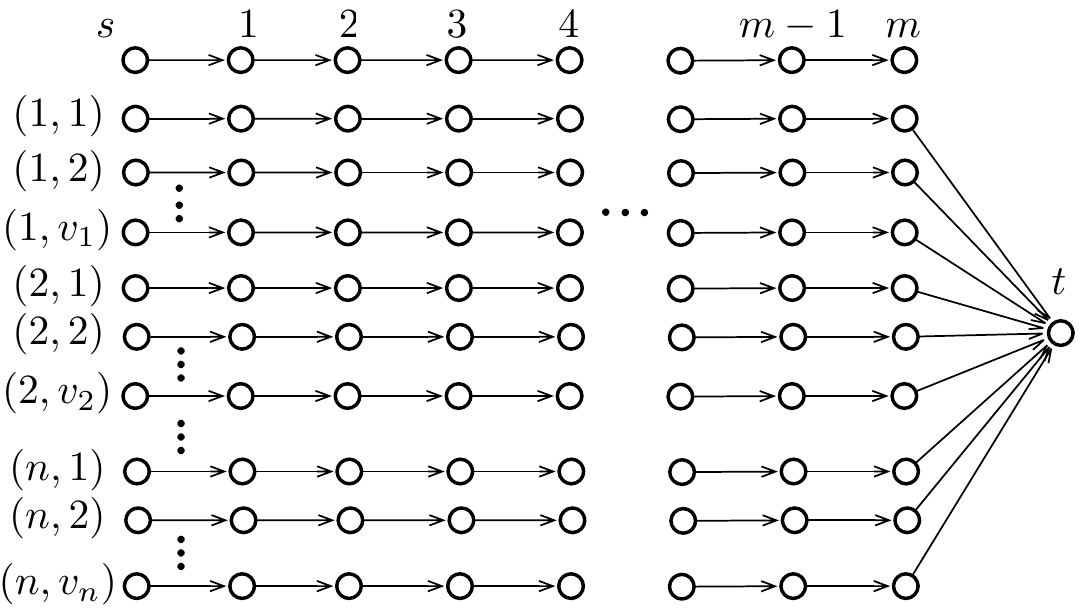}
  \caption{~Graph model for the TPL-OSR problem (only the horizontal edges and ending edges are showed).}
  \label{fig:graph_singlerow}
\end{figure}

\begin{figure*}[tb]
  \centering
  \hspace{-.1in}
  \subfigure[] {\includegraphics[width=0.20\textwidth]{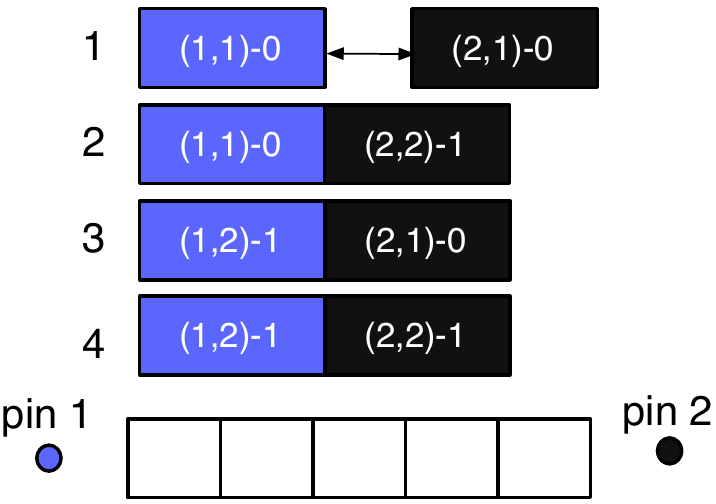}}
  \subfigure[] {\includegraphics[width=0.26\textwidth]{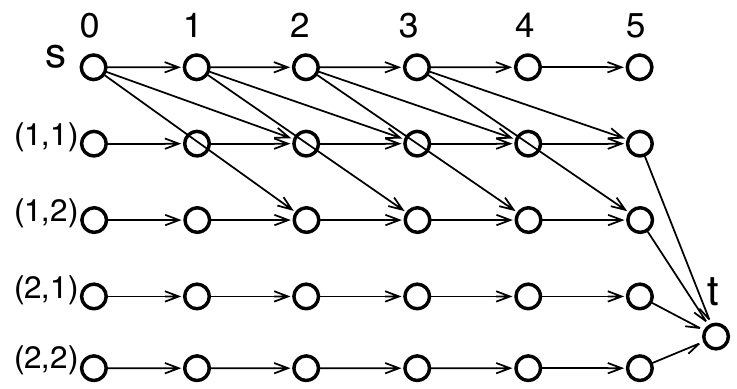}}
  \subfigure[] {\includegraphics[width=0.26\textwidth]{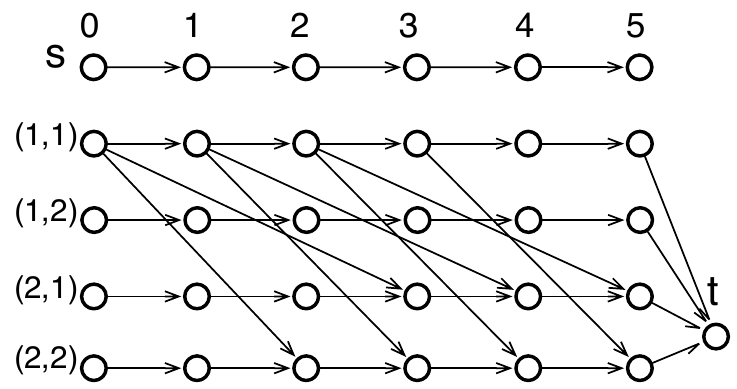}}
  \subfigure[] {\includegraphics[width=0.26\textwidth]{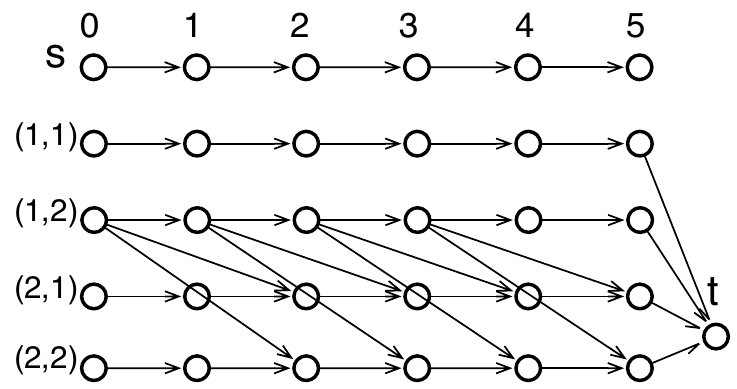}}
  \hspace{-.2in}
  \caption{~Example for the TPL-OSR problem.~(a) two cells with different coloring solutions to be placed into a 5 sites row;
                Graph models with diagonal edges (b) from s vertex to first cell; (c) from c1\_1 to second cell; (d) from c1\_2 to second cell.}
  \label{fig:example_graph}
\end{figure*}

To simultaneously minimize the HPWL and stitch number, we define the cost on edges as follows.
(1) All horizontal edges are with zero cost.
(2) For ending edge $\{(r(i, p), m) \rightarrow t\}$, it is labelled by the cost $(n-i) \cdot M$, where $M$ is a large number.
(3) For diagonal edge $\{(r(i, p), k) \rightarrow (r(j,q), k+w(c_j) + LUT(i, p, j, q))\}$, it is labelled by the cost as follows:
\begin{displaymath}
\Delta WL + \alpha \cdot s(i, p) + \alpha \cdot s(j, q)
\end{displaymath}
where $\Delta WL$ is the HPWL increment of placing $c_j$ in position $q - \textrm{LUT}(i, p, j, q)$.
Here $\alpha$ is a user-defined parameter for assigning relative importance between the HPWL and the stitch number.
In our framework, $\alpha$ is set as 10.
The general structure of $G$ is shown in Fig. \ref{fig:graph_singlerow}.
Note that for clarity here we do not show the diagonal edges.

\begin{figure}[tb]
  \centering
  \hspace{-.2in}
  \subfigure[] {\includegraphics[width=0.245\textwidth]{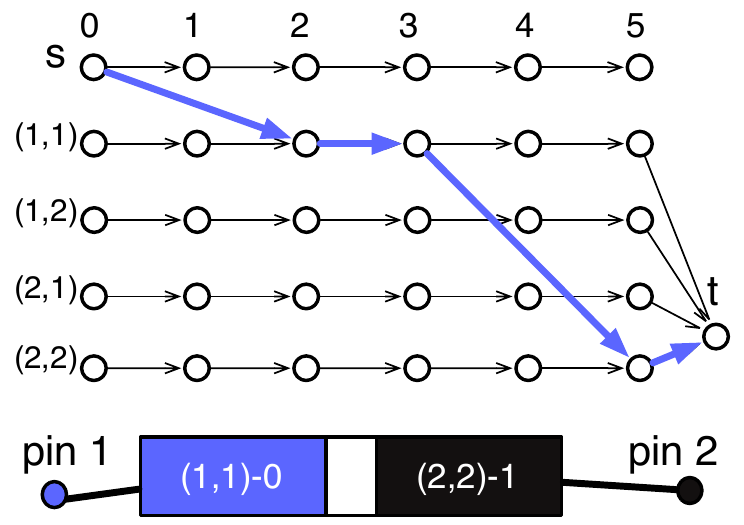}}
  \subfigure[] {\includegraphics[width=0.245\textwidth]{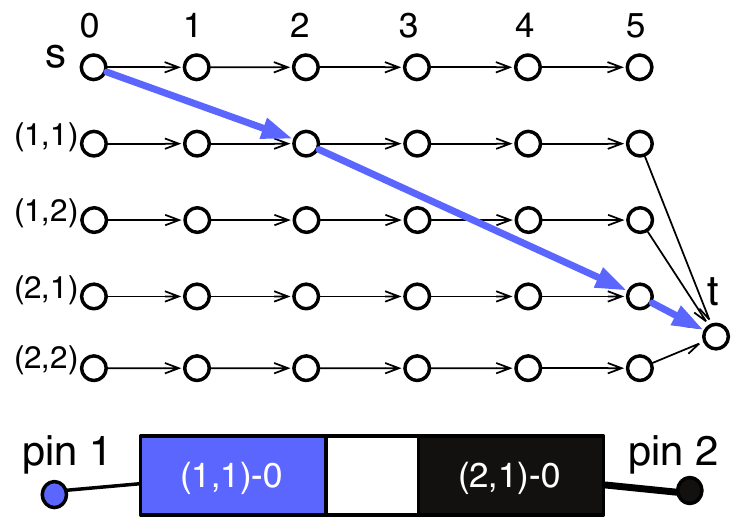}}
  \hspace{-.2in}
  \caption{Shortest path solutions on the graph model with (a) 1 stitch; (b) 0 stitch.}
  \label{fig:example_path}
\end{figure}

One example of the graph model is illustrated in Fig. \ref{fig:example_graph},
where two cells $c_1$ and $c_2$ are to placed in a row with 5 sites.
Each cell has two different coloring solutions, and corresponding required stitch number.
For example, the label (2,1)-0 means $c_2$ is assigned to the first coloring solution, with no stitch.
The graph model is shown in Fig. \ref{fig:example_graph}(b)(c)(d), where each figure shows different part of diagonal edges.
Cells $c_1$ and $c_2$ are connected with pin 1 and pin 2, respectively.
Therefore, $c_1$ tends to be on the left side of row, while $c_2$ tends to be on the right side.
Fig. \ref{fig:example_path} gives two shortest path solutions with the same HPWL.
Because the second one is with less stitch number, it would be selected as the solution for TPL-OSR problem.

Since $G$ is a directed acyclic graph, the shortest path can be calculated using topological traversal of $G$ in $O(mnK)$ steps,
where $K$ is the maximal pre-coloring solution number for each cell.
To apply topological traversal, a dynamic programming algorithm is proposed to find the shortest path from the $s$ vertex to the $t$ vertex.

\subsection{Two Stage Speedup}

\begin{figure}[htb]
  \centering
  \subfigure[]{\includegraphics[width=0.23\textwidth]{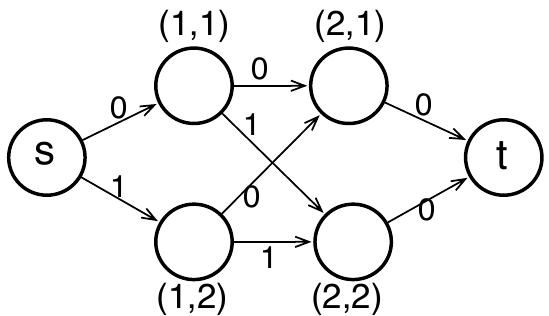}}
  \subfigure[]{\includegraphics[width=0.23\textwidth]{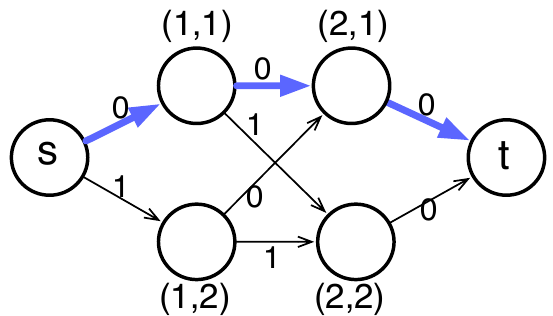}}
  \caption{First stage model to solve color assignment. In this example edge cost only considers the stitch number minimization.}
  \label{fig:model_stage1}
\end{figure}

Although the shortest path algorithm can be solved in $O(mnK)$, for practical design when each cell could allow many pre-coloring solutions, the proposed graph model may still suffer from long runtime penalty.
To achieve a better trade-off between runtime and performance, here we propose a two-stage speedup technique.
The main idea is that the whole previous graph model is decomposed into two smaller graph models, one for color assignment, and another for cell placement.

To solve the example in Fig. \ref{fig:example_graph}, the first stage graph model is illustrated in Fig. \ref{fig:model_stage1} (a), where the cost of each edge corresponds to the stitch number required for each cell-color pair $(i, p)$.
Note that in our framework, relative positions among cells are also considered in the edge cost.
A shortest path on the graph corresponds to a color assignment with minimum stitch number.

Our second stage is for cell placement, and the previous color assignment solutions are considered here.
That is, if in previous color assignment cells $c_{i-1}$ and $c_i$ are assigned its $p-$th and $q-$th coloring solutions,
then the width of cell $c_{i}$ is changed from $w(c_i)$ to $w(c_i) + \textrm{LUT}(i-1, p, i, q)$.
By this way, the extra site to resolve coloring conflicts are prepared for cell placement.
Based on the updated cell widths, the graph model in \cite{PLACE_ICCAD05_Kahng} can be directly applied here.

The first graph model can be solved in $O(nK)$, while the second graph model can be resolved in $O(mn)$.
Therefore, although the speedup technique can not achieve optimal solution of TPL-OSR problem,
applying the two-stage graph model can reduce the complexity from $O(mnK)$ to $O(nK + mn)$.

\section{Overall Placement Scheme}
\label{sec:dplace}

In this section we present our overall scheme for the TPL aware detailed placement.

\begin{algorithm} [tb]
\caption{TPL aware Detailed Placement}
\label{alg:dplace}
\begin{algorithmic}[1]
  \Require cells to be placed;
  \Repeat
  \State Sort all rows;
  \State Label all rows as $FREE$;
  \For {each row $row_i$}
    \State Solve TPL-OSR prolbem for $row_i$;
    \If {exist unsolved cells}
      \State Global Moving;
      \State Update cell widths considering assigned colors;
      \State Solve cell placement (OSR) for $row_i$;
    \EndIf
    \State Label $row_i$ as $BUSY$;
  \EndFor
  \Until{no significant improvement}
\end{algorithmic}
\end{algorithm}

The flow of our detailed placement algorithm is summarized in Algorithm \ref{alg:dplace}.
In each main loop, rows are sorted that the row with more cells occupied would be solved earlier.
At the beginning, all rows are labeled as $FREE$, which means it can be inserted additional cells (line $3$).
For each row $row_i$, we propose TPL-OSR algorithm as introduced in Section \ref{sec:singlerow} to solve color assignment and cell placement simultaneously.
Note that sometimes TPL-OSR cannot guarantee to assign all cells into row, due to extra sites required to resolve coloring conflicts.

If TPL-OSR ends with unsolved cells, \textit{Global Moving} is applied to move some cells to other rows (line $7$).
The basic idea behind the Global Moving is to find the ``optimal row and site'' for a cell in the placement region, and remove some local triple patterning conflicts.
For each cell we define its ``optimal region'' as the site to place where the HPWL is optimal \cite{PLACE_TCAS81_Goto}.
Note that one cell can be only moved to $FREE$ rows.
Since some cells in the middle of row may be moved, we need to solve OSR problem to rearrange the cell positions \cite{PLACE_ICCAD05_Kahng}.
Note that since all cells on the row have been assigned colors, cell widths should be updated to preserve extra space for coloring conflict (line $8-9$) .
After solving one $row_i$, it is labeled as $BUSY$ (line $10$).

Since the rows are placed and colored one by one sequentially, the solution obtained within one single row may not be good enough,
Therefore, our scheme is able to repeatedly call the main loop, until no significant improvement is achieved.

\section{Experimental Results}
\label{sec:result}

We implement our standard cell pre-coloring and TPL aware detailed placement in C++,
and all the experiments are performed on a Linux machine with 3.0GHz CPU.
We use Design Compiler \cite{design_compiler} to synthesize OpenSPARC T1 designs based on Nangate 45nm standard cell library \cite{nangate}.
During benchmark generation, all the library and benchmark are scaled down to $16 nm$ technology node.
For each benchmark, we perform placement with Cadence SOC Encounter \cite{socEncounter} to generate initial placement result.
To better compare the performance of detailed placement under different placement densities, 
for each circuit, we choose three different core utilization rates 0.7, 0.8, and 0.9.

\begin{table*}[tb]
\centering
\scriptsize
\renewcommand{\arraystretch}{1.1}
\caption{Comparisons of Detailed Placement Algorithms}
\label{tab:compare}
\begin{tabular}{|c|c|c|c||c|c|c|c|c|c|c|c|c|c|c|c|}
  \hline \hline
  bench   &\multicolumn{3}{c||}{Post-Decomposition}  &\multicolumn{4}{c|}{GREEDY \cite{DFM_SPIE2013_Gao}}
          &\multicolumn{4}{c|}{TPLPlacer}                               &\multicolumn{4}{c|}{TPLPlacer-SPD}\\
  \cline{2-16}  &CN\# &ST\#  &CPU(s)      &CN\# &ST\# &WLD  &CPU(s)        &CN\# &ST\# &WLD &CPU(s)  &CN\# &ST\# &WLD &CPU(s) \\
  \hline
  alu-70  &605  &4092  &0.7   &0   &1254  &+0.06\%  &2.0      &0 &1013 &-0.94\% &107.2        &0 &994  &-0.77\% &4.2    \\
  alu-80  &656  &4100  &0.6   &N/A &N/A   &N/A      &N/A      &0 &1011 &-1.70\% &114.8        &0 &994  &-1.48\% &4.6    \\
  alu-90  &596  &3585  &0.5   &N/A &N/A   &N/A      &N/A      &0 &1006 &-2.38\% &120.2        &0 &994  &-2.2\%  &4.8    \\
  byp-70  &1683 &9943  &2.4   &0   &3254  &0.14     &0.97     &0 &2743 &-5.98\% &382.5        &0 &2545 &-5.69\% &9.2    \\
  byp-80  &1918 &10316 &2.6   &N/A &N/A   &N/A      &N/A      &0 &2889 &-2.58\% &343.0        &0 &2545 &-2.12\% &7.9    \\
  byp-90  &2285 &10790 &3.0   &N/A &N/A   &N/A      &N/A      &0 &3136 &+1.74\% &361.9        &0 &2514 &+4.31\% &7.1    \\
  div-70  &1329 &6017  &2.2   &0   &2368  &+0.08\%  &1.89     &0 &2119 &-3.84\% &117.6        &0 &2017 &-3.28\% &5.6    \\
  div-80  &1365 &5965  &2.1   &0   &2379  &+0.08\%  &1.87     &0 &2090 &-2.06\% &135.6        &0 &2017 &-1.63\% &6.1    \\
  div-90  &1345 &5536  &2.1   &0   &2365  &+0.02\%  &1.87     &0 &2080 &-4.79\% &155.2        &0 &2017 &-4.37\% &6.4    \\
  ecc-70  &206  &3852  &0.9   &N/A &N/A   &N/A      &N/A      &0 &247  &-4.76\% &69.4         &0 &228  &-4.6\%  &1.7    \\
  ecc-80  &265  &3366  &1.0   &0   &433   &+0.43\%  &0.44     &0 &274  &-2.51\% &58.2         &0 &228  &-2.19\% &1.5    \\
  ecc-90  &370  &4015  &1.1   &N/A &N/A   &N/A      &N/A      &0 &369  &-1.28\% &68.5         &0 &228  &-1.53\% &1.4    \\
  efc-70  &503  &3333  &0.7   &0   &1131  &+0.0 \%  &5.46     &0 &1005 &-1.32\% &32.4         &0 &1005 &-1.31\% &6.2    \\
  efc-80  &570  &4361  &0.6   &N/A &N/A   &N/A      &N/A      &0 &1008 &-3.35\% &37.7         &0 &1005 &-3.26\% &6.3    \\
  efc-90  &534  &4040  &0.8   &0   &1133  &+0.0 \%  &5.4      &0 &1005 &+0.35\% &39.0         &0 &1005 &+0.35\% &6.3    \\
  ctl-70  &425  &2583  &1.3   &0   &703   &+0.23\%  &3.8      &0 &573  &-1.75\% &67.3         &0 &553  &-1.56\% &5.3    \\
  ctl-80  &529  &3332  &1.4   &0   &714   &+0.5 \%  &3.8      &0 &561  &-2.26\% &78.8         &0 &553  &-2.04\% &5.5    \\
  ctl-90  &519  &3241  &1.5   &0   &726   &+0.4 \%  &3.8      &0 &556  &-0.63\% &85.4         &0 &553  &-0.5\%  &5.6    \\
  top-70  &5893 &27981 &9     &N/A &N/A   &N/A      &N/A      &0 &8069 &-10.6\% &1948.0       &0 &8034 &-10.4\% &43.5   \\
  top-80  &6775 &32352 &10.3  &N/A &N/A   &N/A      &N/A      &0 &8120 &-5.45\% &1696.7       &0 &8015 &-4.9\%  &36.8   \\
  top-90  &7313 &29343 &11.4  &N/A &N/A   &N/A      &N/A      &0 &8710 &-4.41\% &1850.9       &0 &7876 &+2.09\% &32.7   \\
  \hline
  Average &\textbf{1700} &8664  &2.68  &N/A&N/A&N/A&N/A       &\textbf{0} &2314 &-2.88\% &142.9        &\textbf{0} &2186 &-2.24\% &9.94   \\
  Ratio   &&& &&&& &&\textbf{1.0}&\textbf{}&\textbf{1.0}      &&\textbf{0.95}&\textbf{}&\textbf{0.07} \\     
  \hline \hline
\end{tabular}
\end{table*}

We compare four different design flows for M1 layer of all the benchmarks.
``\textbf{Post-Decomposition}'' means the conventional TPL design flow, where Encounter is chosen as the placer and an academic decomposer \cite{TPL_DAC2012_Fang} is applied for layout decomposition.
Note here the standard cell inner native conflicts have been removed through our compliance techniques (see Section \ref{sec:cell}).
We implement the greedy based detailed placement algorithm in \cite{DFM_SPIE2013_Gao}, denoted as ``\textbf{GREEDY}''.
Although the work in \cite{DFM_SPIE2013_Gao} is for the self-aligned double patterning (SADP) friendly design,
the proposed detailed placement algorithm can be integrated into our framework as well.
``\textbf{TPLPlacer}'' and ``\textbf{TPLPlacer-SPD}'' are the proposed detailed placement, where the ``TPLPlacer'' applies the optimal graph model to solve cell placement and color assignment simultaneously, while the ``TPLPlacer-SPD'' uses fast two-stages graph models to solve color assignment and cell placement iteratively.

All the experimental results are listed in Table \ref{tab:compare}, where columns ``CN\#'' and ``ST\#'' are the conflict number and the stitch number on the final decomposed layout, respectively.
Column ``WLD'' shows the total wire length difference between initial placement and our TPL aware placement, and column ``CPU(s)'' gives the runtime.
First, from the table we can see that under the conventional design flow, which is placement + layout decomposition,
even each standard cell itself is TPL-friendly, averagely 1,700 conflicts are reported in final decomposed layout.
Second, we can see that although for some cases GREEDY can achieve 0 conflict results, in 10 out of 21 cases it cannot find out legal placement results.
For those illegal results labels ``N/A'' are reported.
The main reason is that GREEDY only shifts the cells toward right direction.
For some benchmark with high cell utilization, it may cause the final placement violation.
In addition, GREEDY uses a greedy method to assign cell color, thus it loses the global view to minimize the stitch number.
Therefore, more stitches are reported for those cases where it finds out legal results.

We further compare our two detailed placement strategies, TPLPlacer, and TPLPlacer-SPD.
From Table \ref{tab:compare} we can see that TPLPlacer can achiever slightly better HPWL (0.22\%).
However, TPLPlacer-SPD can achieve 14x speedup and less stitch number (5\% less).
The speedup is due to the faster two-stage graph model, instead of combined graph model.
The main reason for the less stitch number is that the TPLPlacer-SPD solves color assignment first, followed by cell placement.
Although cell position is integrated into the color assignment model, the shortest path with less number of stitches tends to be selected.
The results in Table \ref{tab:compare} demonstrate the effectiveness of our standard cell compliance and detailed placement techniques.

\section{Conclusion}
\label{sec:conclu}

In this paper we propose a coherent framework to seamlessly integrate the TPL aware optimizations into early design stages.
To our best knowledge, this is the first work for TPL compliance at both standard cell and placement levels.
An optimal graph model to simultaneously solve cell placement and color assignment is proposed,
and then a two-stage graph model is presented to achieve speedup.
Our framework is compared with traditional layout decomposition.
The results show that considering TPL constraints in early design stages can dramatically reduce the conflict number and stitch number in final layout.
As continuing growth of technology node to sub-16 nm, TPL turns out to be a definitely promising lithography solution.
A dedicated design flow that integrating TPL constraints is necessary to assist in the whole process.
We believe this paper will stimulate more research on TPL and TPL aware design.

\vspace{-.1in}
\section{Acknowledgment}

This work is supported in part by NSF grants CCF-0644316 and CCF-1218906, SRC task 2414.001, NSFC grant 61128010, and IBM Scholarship.

{
\scriptsize
\bibliographystyle{IEEEtran}
\vspace{.1in}
\bibliography{./Ref/Bei,./Ref/MPL,./Ref/Algorithm,./Ref/Lith,./Ref/Place}
}

\end{document}